\documentclass[prl,aps,superscriptaddress,floatfix,reprint]{revtex4-2}
\usepackage{amsthm}
\usepackage{amsmath}
\usepackage{amssymb}
\usepackage{amsfonts}
\usepackage{epsfig}
\usepackage{color}
\usepackage{bm}
\usepackage{times}
\usepackage[colorlinks,linkcolor=blue,anchorcolor=blue,citecolor=blue,urlcolor=blue]{hyperref}
\usepackage{braket}
\usepackage{graphicx}
\usepackage{setspace}
\usepackage{subfigure}
\usepackage{enumerate}
\theoremstyle{definition}

\newtheorem*{definition*}{Definition}
\theoremstyle{plain}

\newtheorem{proposition}{Proposition}
\newtheorem*{theorem*}{Theorem}
\newtheorem*{corollary*}{Corollary}
\newtheorem*{lemma*}{Lemma}
\newtheorem*{proposition*}{Proposition}

\NewDocumentEnvironment{variant}{O{theorem} D(){} m}
{\addtocounter{#1}{-1}%
\expandafter\renewcommand\csname the#1\endcsname{\ref{#3}$'$}%
\begin{#1}[#2]}
{\end{#1}}
\NewDocumentEnvironment{appthm}{O{theorem} D(){} m}
{\addtocounter{#1}{-1}%
\expandafter\renewcommand\csname the#1\endcsname{\ref{#3}}%
\begin{#1}[#2]}
{\end{#1}}

\begin{document}
\newcommand{\papertitle}{Operation Mpemba effect: Breakdown of resource-Markovianity of free dynamics}
\title{\papertitle}

    \author{Tian-Ren Jin}
    \affiliation{Institute of Physics, Chinese Academy of Sciences, Beijing 100190, China}
    \affiliation{School of Physical Sciences, University of Chinese Academy of Sciences, Beijing 100049, China}

    \author{Yu-Ran Zhang}
    \email{yuranzhang@scut.edu.cn}
    \affiliation{School of Physics and Optoelectronics, South China University of Technology, Guangzhou 510640, China}

    \author{Heng Fan}
    \email{hfan@iphy.ac.cn}
    \affiliation{Institute of Physics, Chinese Academy of Sciences, Beijing 100190, China}
    \affiliation{School of Physical Sciences, University of Chinese Academy of Sciences, Beijing 100049, China}
    \affiliation{Beijing Academy of Quantum Information Sciences, Beijing 100193, China}
    \affiliation{Hefei National Laboratory, Hefei 230088, China}
    \affiliation{Beijing Key Laboratory of Advanced Quantum Technology, Beijing 100190, China}

\begin{abstract}
    The Mpemba effect refers to faster relaxation of states that are initially farther from equilibrium, yet its characterization is often tied to a chosen distance or resource measure.
    We introduce resource-Markovianity, an extended concept of quantum Markovianity to quantum resource theories, and formulate the resource Mpemba effect operationally as the breaking of resource-Markovianity by a relaxation operation.
    This yields a measure-independent operational characterization of resource Mpemba effects in general resource theories, together with quantitative characterizations based on resource-non-Markovianity measures.
    We illustrate the framework with the Mpemba effect for distinguishability of states, due to its relation to quantum Markovianity, and with the thermomajorization Mpemba effect from an operational perspective.
    These results reveal a deep interplay between quantum resources, non-Markovianity, and the Mpemba effect.
\end{abstract}
\maketitle
    \emph{Introduction}---%
    Mpemba effect describes a counterintuitive phenomenon where a hotter system cools down faster than a cooler one~\cite{mpemba1969cool,burridge2016questioning,bechhoefer2021fresh}. 
    In recent years, it has been developed into a broadly studied nonequilibrium relaxation phenomenon, with extensions far beyond its original hot-water formulation~\cite{lu2017nonequilibrium,PhysRevX.9.021060,ibanez2024heating,PhysRevLett.134.107101}. 
    Representative examples include granular and molecular gases~\cite{PhysRevLett.119.148001,PhysRevE.99.060901,santos2020mpemba,PhysRevE.105.054140}, spin glasses~\cite{baity2019mpemba}, colloidal systems~\cite{kumar2020exponentially,kumar2022anomalous,PhysRevLett.129.138002}, and engineered relaxation protocols~\cite{PhysRevLett.131.017101,PhysRevLett.132.117102}.
    Recently, this effect has also been observed in quantum systems, attracting considerable attention~\cite{ares2025quantum,PhysRevB.100.125102,PhysRevLett.127.060401,PhysRevA.106.012207,ares2023entanglement,PhysRevLett.131.080402,PhysRevLett.133.010401,PhysRevLett.133.010402,PhysRevLett.133.010403,PhysRevLett.133.136302,murciano2024entanglement,PhysRevLett.133.140405,PhysRevB.111.104312,PhysRevB.111.104506,PhysRevB.111.125404,PhysRevResearch.6.033330,PhysRevLett.134.220403,5d6p-8d1b,9f6l-d766,yzjd-pk8h,xu2025observation}.
    In the context of the quantum Mpemba effect, two quantum states $\rho_1(0)$ and $\rho_2(0)$ are prepared at different distances from an equilibrium state $\rho^*$, with $d(\rho_1(0),\rho^*) > d(\rho_2(0),\rho^*)$. 
    Under relaxation dynamics, the corresponding distance curves may cross at a time $t_c$, such that $d(\rho_1(t),\rho^*) \leq d(\rho_2(t),\rho^*)$ for $t>t_c$.
    The quantum Mpemba effect has been explored beyond thermalizing dynamics for instance in symmetry-restoration settings~\cite{ares2023entanglement,PhysRevLett.133.140405,ares2023lack,rylands2024dynamical,chalas2024multiple} and has also been extended to various quantum resources within the framework of quantum resource theory~\cite{rbt4-psfd,aditya2025mpemba}.

    For the resource Mpemba effect, one considers two resourceful states $\rho_1(0)$ and $\rho_2(0)$ with different amount of specific resources, measured by $M(\rho)$, such that $M(\rho_1(0)) > M(\rho_2(0))$. 
    Since $M(\rho)$ is monotonic under free operations $\mathcal{N}_{\mathrm{free}}$, $M(\mathcal{N}_{\mathrm{free}}(\rho)) \leq M(\rho)$, free relaxation drives resourceful states toward free states.
    The resource Mpemba effect occurs when more resourceful states relax faster than the less resourceful ones, i.e., there exists a time $t_c$ such that $M(\rho_1(t)) < M(\rho_2(t))$ for $t > t_c$.

    A fundamental challenge in characterizing the Mpemba effect lies in its dependence on the choice of distance measure $d(\rho_1,\rho_2)$~\cite{PhysRevE.108.024902}, since physical phenomena should ideally be independent of the specific chosen measure. 
    In classical thermodynamics, thermomajorization theory has been proposed to solve this problem~\cite{PhysRevLett.134.107101}. 
    In quantum resource theories, measure-dependent resource Mpemba effects remain physically meaningful because resource measures quantify specific manipulation tasks, and in reversible theories, the ambiguity disappears since convertibility is governed by a unique independent measure.
    However, in general resource theories there is still no universal measure-independent characterization of the Mpemba effect.

    Recently, the Pontus-Mpemba effect has been proposed as a ``useful'' version of Mpemba effect, demonstrating that for a fixed initial state, a heating stage before relaxation can lead to faster relaxation than direct relaxation~\cite{hhgj-89gj}. 
    As an initialization procedure, the heating stage plays a central role in the Pontus-type protocol, motivating an operational viewpoint of the Mpemba effect. 
    In addition, free convertibility provides a natural way to organize resource-monotonic states in resource theories. 
    For instance, thermomajorization can be understood as the order relation induced by free convertibility in classical thermodynamics~\cite{sagawa2022entropy}. 
    Recent progress on generalized quantum Stein's lemma has also provided a framework for constructing reversible resource theories~\cite{hayashi2025generalized}, further motivating the use of free operations to organize resource-monotonic states.

\begin{figure*}
    \centering
    \includegraphics[width=\textwidth]{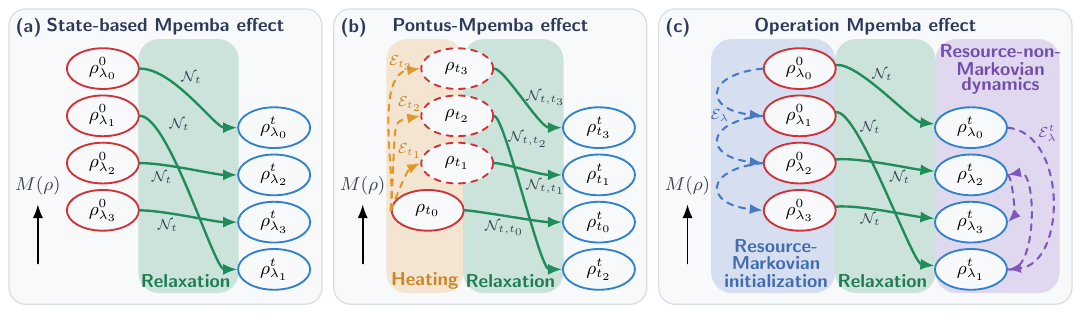}
    \caption{
        Schematic comparison of the state-based, Pontus, and operation perspectives on resource Mpemba effect.
        The vertical axis denotes the resource measure $M(\rho)$, red circles denote initial resourceful states, green arrows denote free relaxations, and blue circles denote the corresponding output states.
        (a) State-based Mpemba effect.
        Different initial states are compared under the same relaxation dynamics, and the more resourceful initial state $\rho_{\lambda_1}^0$ relaxes to a less resourceful state than the less resourceful initial states $\rho_{\lambda_2}^0$ and $\rho_{\lambda_3}^0$.
        (b) Pontus-Mpemba effect.
        A two-step protocol consisting of heating stage followed by relaxation is compared with direct relaxation from the same initial state $\rho_{t_0}$.
        (c) Operation Mpemba effect.
        A relaxation operation $\mathcal{N}_t$ breaks the resource-Markovianity of an initialization dynamics $\mathcal{E}_{\lambda}$, turning it into the induced dynamics $\mathcal{E}_{\lambda}^t=\mathcal{N}_t\circ\mathcal{E}_{\lambda}$.
    }
    \label{figure1}
\end{figure*}

    In this Letter, we investigate resource-monotonic initial states generated from a fixed state by a free initialization dynamics. 
    We call such a dynamics resource-Markovian if the resource measure of the evolved states decreases monotonically along the dynamics. 
    A relaxation dynamics then exhibits an operation Mpemba effect when it breaks the resource-Markovianity of the initialization dynamics. 
    This yields a measure-independent operational characterization of the Mpemba effect in general resource theories. 
    In reversible resource theories, the state-based formulation acquires an operational interpretation. 
    Moreover, the operation Mpemba effect is directly linked to resource-Markovianity of free dynamics, which enables a quantitative characterization of the Mpemba effect through the resource-non-Markovianity quantifiers.
    By illustrating the framework with distinguishability, quantified by the trace distance, and with the thermomajorization Mpemba effect from an operational perspective, we show how Mpemba effects can be characterized through resource backflow in general resource theories.

    \emph{Resource-Markovianity}---%
    For the resource Mpemba effect, the resource measure $M(\rho)$ takes the role of the distance measure $d(\rho,\rho^*)$ in the conventional Mpemba effect. 
    The initial states $\rho_{\lambda}$ for the Mpemba effect are resource-monotonic on the parameter $\lambda$, i.e., $M(\rho_{\lambda_1}) \geq M(\rho_{\lambda_2})$ for $\lambda_1 < \lambda_2$. 
    Resource measures are monotonic under the action of free operations $\mathcal{N}_{\mathrm{free}}$, which reflects the fact that resources cannot be freely generated. 
    In resource manipulation tasks, one may expect that the more resourceful state can be freely converted into less resourceful states, i.e., the second law of resource theories. 
    The second law holds with respect to a unique resource measure $M$, i.e., for any states $\rho_1$ and $\rho_2$ such that $M(\rho_1)\geq M(\rho_2)$, there exists a free operation $\mathcal{F}$ with $\rho_2=\mathcal{F}(\rho_1)$~\cite{PhysRevLett.115.070503}.
    This motivates us to define the resource-monotonic states $\rho_{\lambda}$ by free operations $\mathcal{E}_{\lambda_2,\lambda_1}$, i.e., $\rho_{\lambda_2} = \mathcal{E}_{\lambda_2,\lambda_1}(\rho_{\lambda_1})$.

    The free dynamics $\mathcal{E}_{\lambda}=\mathcal{E}_{\lambda,0}$, induced by $\mathcal{E}_{\lambda_2,\lambda_1}$, satisfies, for $\lambda_1<\lambda_2$ and any state $\rho$,
    \begin{equation}
        M(\mathcal{E}_{\lambda_1}(\rho)) \geq M(\mathcal{E}_{\lambda_2}(\rho)).
    \end{equation}
    Such a dynamics with this property is called \textit{resource-Markovian}. 
    If the resource-Markovian dynamics $\mathcal{E}_{\lambda}$ is invertible, it is called \textit{free-divisible}, since the intermediate map
    \begin{equation}
        \mathcal{E}_{\lambda_2,\lambda_1} = \mathcal{E}_{\lambda_2} \circ \mathcal{E}_{\lambda_1}^{-1},
    \end{equation}
    exists and is free for $\lambda_1 < \lambda_2$.
    The resource-Markovianity refers to the resource flow from system to environment along the free dynamics $\mathcal{E}_{\lambda}$.
    These notions are directly analogous to Markovianity~\cite{PhysRevLett.103.210401} and (C)P-divisible quantum dynamics~\cite{wolf2008dividing,PhysRevLett.103.210401,PhysRevLett.105.050403,rivas2014quantum,RevModPhys.88.021002}, where the trace distance $D(\rho,\sigma)$ plays the role of the resource measure and the free operations are replaced with (C)PTP channels.

    In analogy to the Markovianity of the quantum dynamical maps, we consider the change rate of the resource measure,
    \begin{equation}
        \sigma(\mathcal{K}_{\lambda},\rho) = \frac{\mathrm{d}}{\mathrm{d} \lambda} M(\mathcal{K}_{\lambda}(\rho))
    \end{equation}
    to characterize the resource-Markovianity of a dynamics $\mathcal{K}_{\lambda}$.
    We then define
    \begin{equation} \label{eq: resource_Markovianity}
        N(\mathcal{K}_{\lambda}) = \max_{\rho} \int_{\sigma>0} \!\!\mathrm{d}\lambda \; \sigma(\mathcal{K}_{\lambda}, \rho) \geq 0,
    \end{equation}
    which quantifies the resource backflow for the dynamics $\mathcal{K}_{\lambda}$, in analogy with non-Markovianity measures for quantum dynamics~\cite{PhysRevLett.103.210401}.
    If the operations are resource-Markovian, then $\sigma(\mathcal{K}_{\lambda},\rho)\leq 0$, for all states $\rho$, and consequently $N(\mathcal{K}_{\lambda}) = 0$. 
    Otherwise, the operations are resource-non-Markovian, if $N(\mathcal{K}_{\lambda}) > 0$.
    In particular, for the discrete parameter $\lambda_i \ (i = 1,2,\ldots)$, denoting $[x]_+=\max\{x,0\}$, the measure is
    \begin{equation}
        N(\mathcal{K}) = \max_{\rho} \sum_i \left[M(\mathcal{K}_{\lambda_{i+1}}(\rho)) - M(\mathcal{K}_{\lambda_i}(\rho))\right]_+.
    \end{equation}
    
    \emph{Operation Mpemba effect}---%
    With the notion of resource-Markovianity, we formulate the Mpemba effect from an operational viewpoint.
    Given a state $\rho$, we consider the family of the initial states $\rho_{\lambda} = \mathcal{E}_{\lambda}(\rho)$, where $\mathcal{E}_{\lambda}$ is resource-Markovian. 
    Due to resource-Markovianity, the family of initial states $\rho_{\lambda}$ is monotonic on the parameter $\lambda$ under the resource measure $M(\rho)$, i.e., $M(\rho_{\lambda_1}) \geq M(\rho_{\lambda_2})$ for $\lambda_1 < \lambda_2$. 
    The initial states evolve under a free operation $\mathcal{N}_t$ of relaxation dynamics. 
    If the resource Mpemba effect occurs at the time $t$, then there exist two parameters $\lambda_1 < \lambda_2$ such that
    \begin{equation}
        M(\mathcal{N}_t(\rho_{\lambda_1})) < M(\mathcal{N}_{t}(\rho_{\lambda_2})).
    \end{equation}
    This implies that the induced dynamics $\mathcal{N}_t \circ \mathcal{E}_{\lambda}$ is no longer resource-Markovian in the parameter $\lambda$.

    Therefore, the operational formulation of the Mpemba effect can be understood as the free relaxation operation $\mathcal{N}$ breaking the resource-Markovianity given some resource-Markovian dynamics $\mathcal{E}_{\lambda}$. 
    Here follows the implication.
    \begin{proposition}
        If the operation Mpemba effect occurs for $\mathcal{N}_t$, there exists a family of initial states $\rho_{\lambda}$ that exhibits a resource Mpemba effect under $\mathcal{N}_t$.
    \end{proposition}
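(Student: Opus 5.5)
The plan is to unpack the definition of the operation Mpemba effect and read off the required family of initial states. By definition, $\mathcal{N}_t$ exhibits an operation Mpemba effect if there is a resource-Markovian free dynamics $\mathcal{E}_\lambda$ such that the induced dynamics $\mathcal{E}^t_\lambda=\mathcal{N}_t\circ\mathcal{E}_\lambda$ is resource-non-Markovian. Equivalently, $N(\mathcal{N}_t\circ\mathcal{E}_\lambda)>0$ in the measure of Eq.~\eqref{eq: resource_Markovianity}, in either its continuous or discrete form.

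First I would use $N>0$ to extract a witness. In the discrete case, positivity of the maximized sum of $[\cdot]_+$ terms gives a state $\rho$ and an index $i$ with
\begin{equation}
M(\mathcal{N}_t(\mathcal{E}_{\lambda_{i+1}}(\rho)))>M(\mathcal{N}_t(\mathcal{E}_{\lambda_i}(\rho))).
\end{equation}
In the continuous case, the integral over the region $\sigma>0$ being positive means $\sigma(\mathcal{N}_t\circ\mathcal{E}_\lambda,\rho)>0$ on a set of positive measure. This requires a small regularity assumption, namely that $\lambda\mapsto M(\mathcal{N}_t\circ\mathcal{E}_\lambda(\rho))$ is differentiable, or at least absolutely continuous, as is implicit in the definition of $\sigma$. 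Integrating $\sigma$ over a subinterval where it is positive then produces $\lambda_1<\lambda_2$ with $M(\mathcal{N}_t(\rho_{\lambda_1}))<M(\mathcal{N}_t(\rho_{\lambda_2}))$. This step is the only one requiring care. If I wanted to avoid the regularity assumption, I would argue by contraposition: if $\lambda\mapsto M(\mathcal{N}_t(\mathcal{E}_\lambda(\rho)))$ were nonincreasing for every $\rho$, then $\sigma\le0$ everywhere and $N=0$.

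Second, I would set $\rho_\lambda=\mathcal{E}_\lambda(\rho)$ for this witness $\rho$. Resource-Markovianity of $\mathcal{E}_\lambda$ gives the ordering of the initial states, $M(\rho_{\lambda_1})\ge M(\rho_{\lambda_2})$ for $\lambda_1<\lambda_2$, so the family is an admissible resource-monotonic family in the sense of the state-based formulation.

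Third, I would combine the two facts. The initially more resourceful state $\rho_{\lambda_1}$ ends with strictly less resource than $\rho_{\lambda_2}$ after $\mathcal{N}_t$, which is exactly the resource Mpemba inequality stated before the proposition. If strict initial ordering is desired, $M(\rho_{\lambda_1})>M(\rho_{\lambda_2})$ can be shown as follows. Suppose equality held, $M(\rho_{\lambda_1})=M(\rho_{\lambda_2})$. Then the chain
\begin{equation}
M(\rho_{\lambda_1})\ \ge\ M(\rho_{\lambda_2})\ \ge\ M(\mathcal{N}_t(\rho_{\lambda_2}))\ >\ M(\mathcal{N}_t(\rho_{\lambda_1}))
\end{equation}
still makes the crossing nontrivial. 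Otherwise, for the continuous case, one can shrink the subinterval to a point where $M(\rho_\lambda)$ strictly decreases, if one exists. I would note this as a minor caveat rather than an obstacle, since the paper's definition of resource-monotonic families uses non-strict inequalities.
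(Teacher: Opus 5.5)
Your proposal is correct and follows the paper's proof. Negating resource-Markovianity of $\mathcal{N}_t\circ\mathcal{E}_\lambda$ gives a state $\rho_0$ and $\lambda_1<\lambda_2$ with $M(\mathcal{N}_t\circ\mathcal{E}_{\lambda_1}(\rho_0))<M(\mathcal{N}_t\circ\mathcal{E}_{\lambda_2}(\rho_0))$, and $\rho_\lambda=\mathcal{E}_\lambda(\rho_0)$ is resource-monotonic because $\mathcal{E}_\lambda$ is resource-Markovian. The paper defines non-Markovianity directly as failure of the monotonicity inequality, so the detour through $N>0$ and its regularity caveat is unnecessary, and the non-strict initial ordering is all its definition requires.
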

    \begin{proof}
        Assume the operation Mpemba effect occurs for $\mathcal{N}_t$ on the resource-Markovian dynamics $\mathcal{E}_{\lambda}$. 
        Then, the induced dynamics $\mathcal{N}_t \circ \mathcal{E}_{\lambda}$ is no longer resource-Markovian in $\lambda$. Consequently, there exist two parameters $\lambda_1 < \lambda_2$ and some state $\rho_0$, such that the resource measure of the evolved states increases, i.e., $M(\mathcal{N}_t \circ \mathcal{E}_{\lambda_1}(\rho_{0})) < M(\mathcal{N}_t \circ \mathcal{E}_{\lambda_2}(\rho_{0}))$.
        This shows that the resource-monotonic initial states $\rho_{\lambda} = \mathcal{E}_{\lambda}(\rho_0)$ exhibits a resource Mpemba effect under $\mathcal{N}_t$.
    \end{proof}
    Thus, the operational formulation is a special case of the state-based formulation. 
    Conversely, in reversible resource theories, the state-based formulation also acquires an operational interpretation. 
    If the reversible resource theory further admits one-shot free convertibility, the operational and state-based formulations are equivalent.
    See Supplemental Material~\ref{app: reversibility}, for the proofs.
    Figures~\ref{figure1}(a)--(c) distinguish the state-based, Pontus, and operation Mpemba effects. 
    Our formulation is inspired by Pontus-Mpemba, but it concerns the conventional Mpemba setting, where different initial states are prepared by a resource-Markovian initialization dynamics and then relaxed by the same operation.

    Without loss of generality, we assume the existence of the inverses for both $\mathcal{E}_{\lambda}$ and $\mathcal{N}$. 
    Otherwise, when $\mathcal{N}$ is non-invertible, one may consider the family $(1-\epsilon)\mathcal{N} + \epsilon\mathcal{I}$, which approaches $\mathcal{N}$ as $\epsilon \to 0$. 
    Given this assumption, the induced operations $\mathcal{N} \circ \mathcal{E}_{\lambda}$ are also invertible, and the intermediate maps are
    \begin{equation}
        \tilde{\mathcal{E}}_{\lambda_2,\lambda_1} = \mathcal{N} \circ \mathcal{E}_{\lambda_2,\lambda_1} \circ \mathcal{N}^{-1}.
    \end{equation}
    If no Mpemba effect occurs, $\tilde{\mathcal{E}}_{\lambda_2,\lambda_1}$ is free for every $\lambda_2>\lambda_1$. 
    Therefore, the operation $\mathcal{N}$ has no operation Mpemba effect if and only if the free set $\mathbb{F}$ of operations is invariant under the adjoint action
    \begin{equation}
        \mathcal{N} \circ \mathbb{F} \circ \mathcal{N}^{-1} \subset \mathbb{F}.
    \end{equation}
    For the corresponding condition in Lindblad-generated free dynamics, see the discussion in Supplemental Material~\ref{app: master}.

    With the measure of resource-Markovianity $N(\mathcal{K}_{\lambda})$, Eq.~\eqref{eq: resource_Markovianity}, we can characterize the operation Mpemba effect of $\mathcal{N}$ by considering the maximum breaking of resource-Markovianity
    \begin{equation}
        P(\mathcal{N}) = \max_{N(\mathcal{E}_{\lambda})=0} N(\mathcal{N} \circ \mathcal{E}_{\lambda}) > 0.
    \end{equation}
    For the state Mpemba effect, replacing $\mathcal{E}_{\lambda}(\rho)$ with the resource-monotonic states $\rho_{\lambda}$, the effect is quantified by
    \begin{equation}
        P(\mathcal{N}) = \max_{\rho_{\lambda}} \int_{\sigma>0} \! \! \mathrm{d}\lambda \; \sigma(\mathcal{N}(\rho_{\lambda})) > 0,
    \end{equation}
    where the maximization is over all resource monotonic states and $\sigma(\rho_{\lambda}) = \frac{\mathrm{d}}{\mathrm{d} \lambda} M(\rho_{\lambda})$.
    These quantities characterize the maximum resource backflow induced by the relaxation operation $\mathcal{N}$ on the resource-Markovian dynamics or resource-monotonic states.
    In the following, we will demonstrate the operational framework of the Mpemba effect with these witnesses in examples about the Mpemba effect for the distinguishability and thermomajorization.

\begin{figure*}
    \centering
    \includegraphics[width=\textwidth]{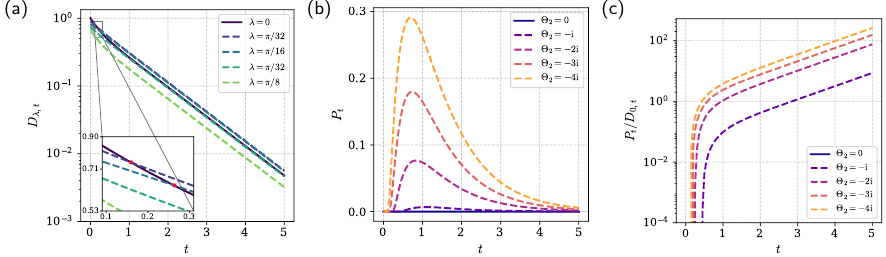}
    \caption{
        Mpemba effect for distinguishability. 
        The state $\rho_1,\rho_2$ are represented by the Bloch vectors $\vec{r}^{(i)} = (r_1^{(i)},r_2^{(i)},r_3^{(i)})^T$ with $r_3^{(i)} = 0$ for $i = 1,2$.
        The trace distance is $D(\rho_1,\rho_2) = \frac{1}{2} \|\vec{r}^{(12)}\|_1$, where $\vec{r}^{(12)} = \vec{r}^{(1)} - \vec{r}^{(2)} = (r_1^{(12)},r_2^{(12)})^{T}$.
        On the two-dimensional vector $\vec{r}^{(12)}$, the initialization and relaxation dynamics are represented as $\exp(\Theta \lambda)$ and $\exp(\Omega t)$ respectively, where $\Theta = \vec{\Theta} \cdot \hat{\bm{\sigma}} + \Theta_0$ with $\vec{\Theta} = (\Theta_1, \Theta_2, \Theta_3)^T$ and $\Omega= \vec{\Omega} \cdot \hat{\bm{\sigma}} + \Omega_0$ with $\vec{\Omega} = (\Omega_1, \Omega_2, \Omega_3)^T$.
        (a) Dynamics of the trace distance $D_{\lambda,t} = D\left(\mathcal{N}_t \circ \mathcal{E}_{\lambda}(\ket{+}),\mathcal{N}_t \circ \mathcal{E}_{\lambda}(\ket{-})\right)$ for different initial parameters $\lambda$. 
        The initial Bloch vector difference $\vec{r}^{(12)} = (2,0)^T$, the parameters of the initialization dynamics are $\Theta_0 = -1, \vec{\Theta} = (0,-4 \mathrm{i},0)^T$, and the relaxation operation are $\Omega_0 = -2, \vec{\Omega} = (1,0,0)^T$.
        Thus, the inner product $|\vec{l} \cdot \vec{\Theta}| = 8$, indicating the occurrence of the operation Mpemba effect for distinguishability. 
        Here, $\vec{l}$ is the eigenvector of adjoint action $\mathrm{ad}_{\Omega}$ of $\Omega$ with positive eigenvalue.
        The Mpemba effect for distinguishability witnessed by trace distance $D_{\lambda,t}$ occurs between $\lambda = 0$ and $\lambda = \pi/32$ or $\lambda = \pi/16$. 
        (b) Dynamics of total positive change $P_{t} = \int_{\sigma > 0} \!\!\mathrm{d}\lambda \; \sigma$, where $\sigma = \mathrm{d} D_{\lambda,t}/\mathrm{d} \lambda$. 
        Parameters of the initialization and relaxation dynamics are the same for (a) except for the varying $\Theta_2$.
        Consequently, the inner product $|\vec{l} \cdot \vec{\Theta}| = 2 |\Theta_2|$ also varies. 
        The nonzero total positive changes $P_t$ verify the occurrence of the Mpemba effect when $\vec{l} \cdot \vec{\Theta} \neq 0$. 
        (c) Dynamics of normalized total positive change $P_{t}/D_{0,t}$. 
        Its increasing indicates that the decreasing total positive change is due to decreasing trace distance.
    }
    \label{figure2}
\end{figure*}

    \emph{Mpemba Effect for distinguishability}---%
    The operation Mpemba effect is defined as the breakdown of the resource-Markovianity for some resource-Markovian dynamics. 
    The quantum Markovianity is a special case of resource-Markovianity, where the free set denotes the whole set of positive and trace-preserving (PTP) quantum channels. 
    For this case, the resource monotonicity is the trace distance between two quantum states $\rho$ and $\sigma$, $D(\rho,\sigma) = \frac{1}{2} \|\rho - \sigma\|_1$, which measures the distinguishability of quantum states~\cite{761271}. 
    Therefore, breaking the quantum Markovianity indicates the occurrence of Mpemba effect for distinguishability.

    Consider the quantum Markovian dynamics $\Phi_{\lambda}$ and the relaxation quantum channel $\mathcal{N}_t$. 
    If the relaxation channel $\mathcal{N}_t$ breaks the Markovianity of the dynamics $\Phi_{\lambda}$ at time $t_c$, the induced dynamics $\mathcal{N}_t \circ \Phi_{\lambda}$ is non-Markovian with $\lambda$. 
    Therefore, there exist two states $\rho$ and $\sigma$, which are initialized as $\rho_{\lambda} = \Phi_{\lambda}(\rho)$ and $\sigma_{\lambda} = \Phi_{\lambda}(\sigma)$, such that for the time $t > t_c$, the trace distance between the relaxed states $\mathcal{N}_t(\rho_{\lambda})$ and $\mathcal{N}_t(\sigma_{\lambda})$ increases at some parameter $\lambda$, i.e., there exist two parameters $\lambda_1 < \lambda_2$,
    \begin{equation}
        D(\mathcal{N}_t(\rho_{\lambda_1}),\mathcal{N}_t(\sigma_{\lambda_1})) < D(\mathcal{N}_t(\rho_{\lambda_2}),\mathcal{N}_t(\sigma_{\lambda_2})).
    \end{equation}
    This indicates that the relaxation operation $\mathcal{N}_t$ exhibits the Mpemba effect for distinguishability.

    We consider a two-level system as an example.
    Any two-level state can be represented as $\rho = \frac{1}{2} (\hat{I} + \sum_{i=1}^3 r_i \hat{\sigma}_i)$, where $\vec{r} = (r_1,r_2,r_3)$ is the Bloch vector. 
    The trace distance between two states $\rho_1$ and $\rho_2$ is $D(\rho_1,\rho_2) = \frac{1}{2} |\vec{r}^{(1)} - \vec{r}^{(2)}|$.
    The Lindblad equation governs the quantum dynamics of the two-level system as
    \begin{equation}
        \frac{\mathrm{d}}{\mathrm{d} t} \rho = -\mathrm{i} [\hat{H}, \rho] + \sum_{i,j} C_{ij} (\hat{\sigma}_i \rho \hat{\sigma}_j - \frac{1}{2} \{\hat{\sigma}_j \hat{\sigma}_i, \rho\}),
    \end{equation}
    where $\hat{H} = \sum_i h_i \hat{\sigma}_i$, and $C_{ij}$ is a Hermitian matrix. 
    It can be represented as the equation of the Bloch vector, ${\mathrm{d} \vec{r}}/{\mathrm{d} t} = \Omega \vec{r} + \vec{b}$, where $\Omega$ is a matrix determined by $h_i$ and $C_{ij}$. 
    The trace distance only depends on the difference $\vec{r}^{(12)} = \vec{r}^{(1)} - \vec{r}^{(2)}$, which satisfies ${\mathrm{d}\vec{r}^{(12)}}/{\mathrm{d} t} = \Omega \vec{r}^{(12)}$.

    Denote the action of the relaxation operation $\mathcal{N}_t$ as $\exp(\Omega t)$, and the initialization dynamics $\mathcal{E}_{\lambda}$ as $\exp(\Theta \lambda)$, see Supplemental Material~\ref{app: distinguishability}, for explicit expressions.
    For simplicity, we set $r_3^{(12)} = 0$, and the Lindblad matrix $\Theta$ reduces to $\Theta = \vec{\Theta} \cdot \hat{\bm{\sigma}} + \Theta_0$, where $\vec{\Theta} = (\Theta_1, \Theta_2, \Theta_3)^T$, and
    The Markovian dynamics $\mathcal{E}_{\lambda}$ satisfies $(\Theta_1^2 + \Theta_3^2)^{1/2} \leq \Theta_0$. 
    The relaxation dynamics $\mathcal{N}_t$ acts on $\vec{\Theta}$ as $\mathrm{ad}_{\Omega} = \vec{\Omega} \cdot \hat{\bm{J}} + \Omega_0$, where
    We consider the case $\Omega_2 = 0$. 
    If the eigenvector $\vec{l}$ of $\mathrm{ad}_{\Omega}$ with a positive eigenvalue is not orthogonal to $\vec{\Theta}$,  the value of $(\Theta_1^2 + \Theta_3^2)^{1/2}$ increases exponentially, and the Mpemba effect occurs.

    We calculate the dynamics of the trace distance $D_{\mathcal{N}_t \circ \mathcal{E}_{\lambda}}(\ket{+},\ket{-})$ for different initialization parameters $\lambda$.
    (See Supplemental Material~\ref{app: distinguishability}, for more details.)
    In Fig.~\ref{figure2}(a), the Mpemba effect for distinguishability occurs for the relaxation operation $\mathcal{N}_t$ between $\mathcal{E}_0$ and $\mathcal{E}_{\pi/32}$ or $\mathcal{E}_{\pi/16}$. In Fig.~\ref{figure2}(b), the total positive change $N_{t}(\mathcal{E}_{\lambda})$ of the trace distance $D_{\mathcal{N}_t \circ \mathcal{E}_{\lambda}}(\ket{+},\ket{-})$ first increases, indicating the occurrence of the Mpemba effect and then decreases to zero. 
    In Fig.~\ref{figure2}(c), the normalized total positive change grows as the time increases, indicating the occurrence of the Mpemba effect at longer time stage.
    This example shows that the operation Mpemba effect for distinguishability is precisely the breakdown of Markovianity under relaxation. Standard non-Markovianity quantifiers therefore directly characterize this effect, providing a direct link between the operational framework of Mpemba effect and quantum non-Markovianity.

\begin{figure*}[t]
    \centering
    \includegraphics[width=\textwidth]{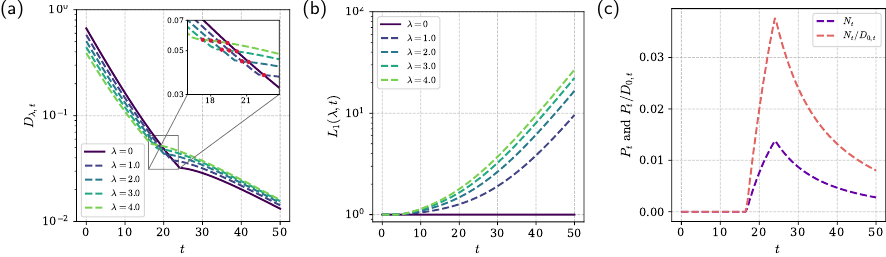}
    \caption{
        Thermomajorization Mpemba effect from the perspective of operation Mpemba effect. 
        The states $\vec{p}(\lambda)$ are initialized from $\vec{p} = (1,0,0)^T$ by initialization dynamics ${\mathrm{d} \vec{p}}/{\mathrm{d}\lambda}  = \hat{A} \vec{p}$ with the symmetric matrix $\hat{A} = [a_{nm}]$ parametrized by $(a_{01},a_{12},a_{02}) \approx (0.010, 0.086, 0.004), a_{nn} = \sum_{m\neq n} a_{mn}$, and then relaxed to $\vec{p}(\lambda,t)$ by the relaxation dynamics ${\mathrm{d} \vec{p}}/{\mathrm{d}t} = \hat{W} \vec{p}$ with the symmetric matrix $\hat{W} = [w_{nm}]$ parametrized by $(w_{01},w_{12},w_{02}) \approx (0.032, 0.066, 0.002), w_{nn} = \sum_{m\neq n} w_{mn}$.
        (a) Dynamics of the detector $D_{\lambda,t} = \max_n [p_n(\lambda,t)/\pi_n]$ of thermomajorization Mpemba effect. 
        The thermomajorization Mpemba effect first occur after the time $t \approx 17$. 
        (b) Dynamics of the $L_1(\lambda,t) = \Vert \hat{R}_t \hat{S}_{\lambda} \hat{R}_t^{-1} \Vert_1$ of the intermediate matrix, which detects the operation Mpemba effect. 
        The operation Mpemba effect first occur at time $t \approx 5$. 
        (c) Dynamics of total positive changes $P_{t} = \int_{\sigma > 0} \!\!\mathrm{d}\lambda \; \sigma$ and the normalization $P_{t}/D_{0,t}$.  
    }
    \label{figure3}
\end{figure*}

    \emph{Thermomajorization Mpemba effect}---%
    In Ref.~\cite{PhysRevLett.134.107101}, the thermomajorization Mpemba effect is introduced to unify the classical thermodynamic Mpemba effect across all monotone measures, thereby providing a measure-independent characterization in classical thermodynamics.
    Let $\vec{\pi} = (\pi_1,\dots,\pi_d)^T$ be a Gibbs thermal state. 
    The classical probability distribution $\vec{p}$ thermomajorizes $\vec{q}$, denoted as $\vec{q} \prec_{\vec{\pi}} \vec{p}$, if the Lorenz curve of $(\vec{p},\vec{\pi})$ lies above that of $(\vec{q},\vec{\pi})$.
    The thermomajorization order $\prec_{\vec{\pi}}$ can be viewed as a measure of the closeness of probability $\vec{p}$ to $\vec{\pi}$, and $\vec{\pi} \prec_{\vec{\pi}} \vec{p}$ for any probability $\vec{p}$~\cite{sagawa2022entropy}.
    For two probability distributions $\vec{p}_t$ and $\vec{q}_t$ relaxed in time $t$, the thermomajorization Mpemba effect occurs at time $t_c$, if for $t < t_c$, $\vec{q}_t \prec_{\vec{\pi}} \vec{p}_t$, while for $t > t_c$, $\vec{p}_t \prec_{\vec{\pi}} \vec{q}_t$.

    The probability $\vec{p}$ thermomajorizes $\vec{q}$, $\vec{q} \prec_{\vec{\pi}} \vec{p}$, if and only if there exists a stochastic matrix $\hat{S}$ such that $\vec{q} = \hat{S} \vec{p}$ and $\vec{\pi} = \hat{S} \vec{\pi}$.
    Here, the stochastic matrix $\hat{S}$ has non-negative elements and satisfies $\hat{S}^{T} \vec{I} = \vec{I}$, where $\vec{I} = (1,\dots,1)^T$. 
    This identifies thermomajorization as a special case of the operation Mpemba effect in classical thermodynamics, where thermal stochastic matrices with a fixed point $\vec{\pi}$ serve as the free operations.

    Denote the thermal relaxation dynamics as $\hat{R}_t$ and the thermal-Markovian dynamics as $\hat{S}_{\lambda}$. 
    If the operation Mpemba effect does not occur at the time $t$, the induced dynamics $\hat{R}_t \hat{S}_{\lambda}$ is thermal-divisible, i.e., the intermediate matrix $\hat{R}_t \hat{S} \hat{R}_t^{-1}$ is still thermal.
    Thus, for $\vec{q} = \hat{S}_{\lambda} \vec{p}$ one has $\vec{q}_t = \hat{R}_t \hat{S} \hat{R}_t^{-1}\vec{p}_t$, which implies that $\vec{q}_t \prec_{\vec{\pi}} \vec{p}_t$ and excludes the thermomajorization Mpemba effect. 
    Therefore, operation Mpemba effect is necessary for thermomajorization Mpemba effect.
    The converse is not automatic: even if $\hat{R}_t \hat{S} \hat{R}_t^{-1}$ is athermal, the inverse intermediate matrix $\hat{R}_t \hat{S}^{-1} \hat{R}_t^{-1}$ need not be thermal, so the thermomajorization Mpemba effect may fail to occur.
    
    Then, we consider a three-level system $\vec{p} = (p_0,p_1,p_2)^T$ as an example.
    The thermal relaxation dynamics is governed by the master equation ${\mathrm{d} \vec{p}}/{\mathrm{d}t}  = \hat{W} \vec{p}$, where the rate matrix $\hat{W} = [w_{nm}]$ satisfies the detailed balance condition $w_{nm} \pi_m = w_{mn} \pi_n$ for $n \neq m$.
    The off-diagonal elements $w_{nm} \geq 0$ for $n \neq m$, and the diagonal elements $w_{nn} = - \sum_{m \neq n} w_{mn}$.
    For simplicity, we set the thermal state at infinite temperature as $\vec{\pi} = (1/3,1/3,1/3)^T$, so the rate matrix $\hat{W}$ is symmetric.
    The initialization dynamics is governed by the master equation ${\mathrm{d}\vec{p}}/{\mathrm{d}\lambda}  = \hat{A} \vec{p}$ with respect to the parameter $\lambda$, where $\hat{A} = [a_{nm}]$.
    
    Figure~\ref{figure3} demonstrates the thermomajorization Mpemba effect with the detector $D(\vec{p},\vec{\pi}) = \max_n [p_n/\pi_n]$ of thermomajorization~\cite{PhysRevLett.134.107101}.
    In Fig.~\ref{figure3}(a), the detector $D_{\lambda,t} = D(\hat{R}_t \hat{S}_{\lambda} \vec{p},\vec{\pi})$ exhibits the Mpemba effect between different initialization parameters $\lambda$, after $t > 17$.
    In Fig.~\ref{figure3}(b), the increase of $L_1(\lambda,t) = \Vert \hat{R}_t \hat{S}_{\lambda} \hat{R}_t^{-1} \Vert_1$, which witnesses the athermality of the intermediate matrix, shows that the operation Mpemba effect occur at $t \approx 5$, indicating that the operational signatures appear earlier than the state signatures. 
    In Fig.~\ref{figure3}(c), the nonzero total positive change $N_{t}$ and normalized total positive change $N_{t}/D_{0,t}$ of the estimator $D_{\lambda,t}$ witness the occurrence of the thermomajorization Mpemba effect at $t \approx 17$.
    See Supplemental Material~\ref{app: thermal}, for more details.

    State convertibility for free operations induces a partial order on states in resource theories, and thermomajorization is a realization in classical thermodynamics. 
    In reversible resource theories with one-shot free convertibility, the order is governed by a unique resource measure and becomes a total order. 
    The operation Mpemba effect only breaks this order-preserving structure, whereas the thermomajorization Mpemba effect reverses the order between the evolved states. 
    For a total order, breaking and reversing the order coincide, while for a partial order the latter is strictly stronger. 
    In this operational framework, we can define a stronger version of the Mpemba effect by requiring the reversal of free convertibility order, which can be viewed as a direct generalization of the thermomajorization Mpemba effect to general resource theories.
    However, our definition of operation Mpemba effect, breaking the resource-Markovianity, is more concise and also captures many interesting implications of the Mpemba effect.

    \emph{Discussion}---%
    We have introduced the operation Mpemba effect in general resource theories as the breakdown of resource-Markovianity of free initialization operations. 
    This formulation is a special case of the state-based formulation, which connects the Mpemba effect with non-Markovianity of free dynamics.
    The Mpemba effect for distinguishability and thermomajorization are demonstrated from the perspective of the operation Mpemba effect.

    Our framework suggests several directions for future work. 
    First, it provides a measure-independent characterization of Mpemba effect in general resource theories, enabling the exploration of anomalous relaxation beyond thermodynamics in settings such as entanglement, coherence, and asymmetry.
    Second, by relating Mpemba effect to non-Markovianity of free dynamics, it introduces quantitative witnesses of Mpemba effect through the resource backflow, thereby broadening the scope of both the Mpemba effect and non-Markovianity.
    Finally, shifting the focus from isolated initial-state comparisons to initialization families, the operational formulation may guide initialization engineering in open quantum systems, such as rapid state preparation or engineered thermalization.
    
    \emph{Acknowledgments}---%
    This work was supported by the National Natural Science Foundation of China (Grants No.~T2121001, No.~U25A6009, No.~92365301, No.~12475017), the MOST (Grant No.~2025YFE0217600), the Scientific Research Innovation Capability Support Project for Young Faculty (Grant No.~SRICSPYF-ZY2025171), the Natural Science Foundation of Guangdong Province (Grant No.~2024A1515010398), the Guangdong Provincial Quantum Science Strategic Initiative (Grant No.~GDZX2505004),  and the Startup Grant of South China University of Technology (Grant No.~20240061). 


%

\clearpage

\onecolumngrid
\hypersetup{pageanchor=false}
\setcounter{page}{1}
\setcounter{section}{0}
\setcounter{equation}{0}
\setcounter{figure}{0}
\setcounter{table}{0}
\setcounter{proposition}{0}
\renewcommand{\thesection}{\Roman{section}}
\renewcommand{\theequation}{S\arabic{equation}}
\renewcommand{\thefigure}{S\arabic{figure}}
\renewcommand{\thetable}{S\arabic{table}}
\renewcommand{\theproposition}{S\arabic{proposition}}
\renewcommand{\theHsection}{supplement.\Roman{section}}
\renewcommand{\theHequation}{supplement.S\arabic{equation}}
\renewcommand{\theHfigure}{supplement.S\arabic{figure}}
\renewcommand{\theHtable}{supplement.S\arabic{table}}
\renewcommand{\theHproposition}{supplement.S\arabic{proposition}}

\begin{center}
    {\large\bfseries Supplemental Material for}\\[0.5em]
    {\large\bfseries ``\papertitle''}\\[1.0em]
    Tian-Ren Jin$^{1,2}$, Yu-Ran Zhang$^{3}$, and Heng Fan$^{1,2,4,5,6,7}$\\[0.5em]
    {\small \it
    $^{1}$Institute of Physics, Chinese Academy of Sciences, Beijing 100190, China\\
    $^{2}$School of Physical Sciences, University of Chinese Academy of Sciences, Beijing 100049, China\\
    $^{3}$School of Physics and Optoelectronics, South China University of Technology, Guangzhou 510640, China\\
    $^{4}$Beijing Academy of Quantum Information Sciences, Beijing 100193, China\\
    $^{5}$Hefei National Laboratory, Hefei 230088, China\\
    $^{6}$Songshan Lake Materials Laboratory, Dongguan 523808, China\\
    $^{7}$CAS Center for Excellence in Topological Quantum Computation, UCAS, Beijing 100190, China
    }
\end{center}

\begin{center}
\begin{minipage}{\textwidth}
    \large
    \noindent\textbf{Contents}\\[1em]
    \noindent\makebox[2.6em][r]{\hyperref[app: reversibility]{I.\ }}\hyperref[app: reversibility]{State-based and Operation Mpemba Effects in Reversible Resource Theories}\hfill\pageref{app: reversibility}\\[0.5em]
    \noindent\makebox[2.6em][r]{\hyperref[app: master]{II.\ }}\hyperref[app: master]{Operation Mpemba Effects in Lindblad Form}\hfill\pageref{app: master}\\[0.5em]
    \noindent\makebox[2.6em][r]{\hyperref[app: distinguishability]{III.\ }}\hyperref[app: distinguishability]{Details of Mpemba Effect for Distinguishability}\hfill\pageref{app: distinguishability}\\[0.5em]
    \noindent\makebox[2.6em][r]{\hyperref[app: thermal]{IV.\ }}\hyperref[app: thermal]{Details of Thermomajorization Mpemba Effect}\hfill\pageref{app: thermal}
\end{minipage}
\end{center}
\vspace{1em}

\section{State-based and Operation Mpemba Effects in Reversible Resource Theories}
\label{app: reversibility}

    This supplementary material explains the relationship between state-based and operation Mpemba effects in reversible resource theories. 
    In a reversible resource theory, there exists a unique asymptotic resource measure $M^{\infty}(\rho)$ governing the optimal asymptotic conversion rate between arbitrary states. 
    More precisely, for any two states $\rho$ and $\sigma$, there exists a sequence of free operations $\mathcal{F}_n$ and integers $m_n$ such that
    \begin{equation}
        \lim_{n\to\infty}\left\|\mathcal{F}_n\bigl(\rho^{\otimes n}\bigr)-\sigma^{\otimes m_n}\right\|_1 = 0,
    \end{equation}
    and
    \begin{equation}
        \lim_{n\to\infty}\frac{m_n}{n} = \frac{M^{\infty}(\rho)}{M^{\infty}(\sigma)}.
    \end{equation}
    In particular, if $M^{\infty}(\rho)\geq M^{\infty}(\sigma)$, then $m_n\geq n$ for sufficiently large $n$. 
    Since discarding subsystems is a free operation, one may first asymptotically convert $\rho^{\otimes n}$ into $\sigma^{\otimes m_n}$ and then discard $m_n-n$ replicas, thereby obtaining $\sigma^{\otimes n}$ asymptotically. 
    Thus, the states are totally ordered by asymptotic free convertibility, and this order is governed by the unique asymptotic resource measure $M^{\infty}$. 
    This total order need not persist at the one-shot level. 
    When it does, i.e., when for two states $\rho$ and $\sigma$ with $M^{\infty}(\rho)\geq M^{\infty}(\sigma)$ there exists a free operation $\mathcal{F}$ such that $\sigma=\mathcal{F}(\rho)$, we say that the reversible resource theory admits one-shot free convertibility.

    \begin{proposition}
        In reversible resource theories, the operation Mpemba effect and the state-based Mpemba effect are asymptotically equivalent at any finite parameter resolution.
    \end{proposition}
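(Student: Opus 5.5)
One direction has already been settled. The first Proposition of the main text states that an operation Mpemba effect for $\mathcal{N}_t$ yields a family $\rho_\lambda=\mathcal{E}_\lambda(\rho_0)$ showing a state-based Mpemba effect. That implication holds in any resource theory, so it holds here with $M=M^\infty$. The remaining work is the converse: a state-based Mpemba effect, observed at finitely many parameter values, should produce a resource-Markovian free initialization dynamics that $\mathcal{N}_t$ turns non-Markovian. Because reversibility controls convertibility only asymptotically, I will realize this dynamics on $n$ copies, with operations $\mathcal{N}_t^{\otimes n}$ and free maps $\mathcal{F}_n$, and claim the equivalence in the limit $n\to\infty$.

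First, fix a finite resolution $\lambda_1<\lambda_2<\cdots<\lambda_K$ and a resource-monotonic family, so that $M^\infty(\rho_{\lambda_i})\ge M^\infty(\rho_{\lambda_{i+1}})$. Suppose the state-based effect occurs at time $t$, meaning $M^\infty(\mathcal{N}_t(\rho_{\lambda_i}))<M^\infty(\mathcal{N}_t(\rho_{\lambda_j}))$ for some $i<j$.

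Second, use the asymptotic convertibility recalled above, together with discarding replicas. For each consecutive pair this gives free operations $\mathcal{F}^{(i)}_n$ with
\[
\bigl\|\mathcal{F}^{(i)}_n(\rho_{\lambda_i}^{\otimes n})-\rho_{\lambda_{i+1}}^{\otimes n}\bigr\|_1\to0 .
\]
Define the discrete initialization dynamics by composition, $\mathcal{E}^{(n)}_{\lambda_k}=\mathcal{F}^{(k-1)}_n\circ\cdots\circ\mathcal{F}^{(1)}_n$. Each step is free, so $M^\infty$, which is monotone under free operations, is nonincreasing along $\mathcal{E}^{(n)}_{\lambda_k}$ for every input. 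The discrete dynamics is therefore resource-Markovian and $N(\mathcal{E}^{(n)})=0$. Since $K$ is finite, the triangle inequality and contractivity of the trace norm under channels give $\|\mathcal{E}^{(n)}_{\lambda_k}(\rho_{\lambda_1}^{\otimes n})-\rho_{\lambda_k}^{\otimes n}\|_1\to0$ uniformly in $k$. This is the reason a finite parameter resolution is needed: a continuum of steps would allow the errors to accumulate.

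Third, apply $\mathcal{N}_t^{\otimes n}$. Contractivity again gives $\mathcal{N}_t^{\otimes n}\circ\mathcal{E}^{(n)}_{\lambda_k}(\rho_{\lambda_1}^{\otimes n})\approx\mathcal{N}_t(\rho_{\lambda_k})^{\otimes n}$. I then evaluate the regularized measure per copy,
\[
\tfrac1n M\bigl(\mathcal{N}_t^{\otimes n}\circ\mathcal{E}^{(n)}_{\lambda_k}(\rho_{\lambda_1}^{\otimes n})\bigr)\to M^\infty(\mathcal{N}_t(\rho_{\lambda_k})) .
\]
The strict inequality between $\lambda_i$ and $\lambda_j$ then survives for large $n$. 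So the induced dynamics has a positive per-copy backflow $N/n$, which is an operation Mpemba effect in the asymptotic sense.

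The main obstacle is this last limit. Passing from trace-norm closeness to closeness of the regularized measure requires asymptotic continuity of $M$, of the Fannes or Alicki–Fannes type, with an error $o(n)$. I would assume this property, as it is standard for the unique measure of a reversible theory such as the relative entropy of resource. The strictness of the gap $M^\infty(\mathcal{N}_t(\rho_{\lambda_j}))-M^\infty(\mathcal{N}_t(\rho_{\lambda_i}))>0$ absorbs the $o(n)/n$ error.

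The converse direction is handled the same way. If $\mathcal{N}_t^{\otimes n}$ breaks the resource-Markovianity of $\mathcal{E}^{(n)}$, the main-text Proposition yields a per-copy increase of $M^\infty$ between the images of $\rho_{\lambda_i}$ and $\rho_{\lambda_j}$, which is a state-based effect. Finally, I would note that with one-shot free convertibility the maps $\mathcal{F}^{(i)}$ exist at $n=1$, which recovers the exact equivalence stated in the main text.
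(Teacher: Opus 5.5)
Your proposal follows the paper's proof essentially step for step. You discretize the parameter, compose the asymptotic adjacent conversions into a resource-Markovian $n$-copy family, bound the accumulated error with the triangle inequality and trace-norm contractivity, and then apply $\mathcal{N}_t^{\otimes n}$. Your explicit appeal to asymptotic continuity of $M$ (error $o(n)$), which carries the strict $M^\infty$ gap over to the $n$-copy outputs, fills in a step the paper leaves implicit.

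Your last paragraph is unnecessary, since that direction is just the main-text Proposition, and it is slightly overstated. Breaking resource-Markovianity of $\mathcal{E}^{(n)}$ only guarantees an increase for some $n$-copy input, possibly of size $o(n)$. It does not give a per-copy increase of $M^\infty$ between $\mathcal{N}_t(\rho_{\lambda_i})$ and $\mathcal{N}_t(\rho_{\lambda_j})$.
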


    \begin{proof}
        The implication from the operational formulation to the state-based formulation is the proposition in the main text and does not require reversibility. 
        Conversely, assume that the state-based Mpemba effect for the relaxation operation $\mathcal{N}_t$ holds for a resource-monotonic state family $\rho_{\lambda}$.
        Consider a finite discretization $\lambda^{(1)}<\lambda^{(2)}<\cdots<\lambda^{(L)}$ of the resource-monotonic state family $\rho_{\lambda}$. 
        For each adjacent pair $\lambda^{(j)}<\lambda^{(j+1)}$, asymptotic reversibility gives a sequence of free operations $\mathcal{E}^{(n)}_{\lambda^{(j+1)},\lambda^{(j)}}$ such that
        \begin{equation}
            \lim_{n\to\infty}\left\| \mathcal{E}^{(n)}_{\lambda^{(j+1)},\lambda^{(j)}} \bigl(\rho_{\lambda^{(j)}}^{\otimes n}\bigr) - \rho_{\lambda^{(j+1)}}^{\otimes n} \right\|_1 =0.
        \end{equation}
        Define the discretized initialization family from the common input $\rho_{\lambda^{(1)}}^{\otimes n}$ by
        \begin{equation}
            \mathcal{E}^{(n)}_{\lambda^{(1)}}=\mathcal{I}^{\otimes n},\qquad
            \mathcal{E}^{(n)}_{\lambda^{(j)}}= \overleftarrow{\bigcirc}_{m=1}^{j-1} \mathcal{E}^{(n)}_{\lambda^{(m+1)},\lambda^{(m)}}, 
        \end{equation}
        where $\overleftarrow{\bigcirc}$ denotes composition from right to left as $m$ increases.
        Since each intermediate map is free, this discretized family is resource-Markovian. 
        Moreover, by repeated use of the triangle inequality and contractivity of the trace distance under quantum operations, one obtains
        \begin{align}
            \Vert \mathcal{E}^{(n)}_{\lambda^{(j)}}\bigl(\rho_{\lambda^{(1)}}^{\otimes n}\bigr)-\rho_{\lambda^{(j)}}^{\otimes n}\Vert_1 
            & \leq \Vert \mathcal{E}^{(n)}_{\lambda^{(j)}}(\rho_{\lambda^{(1)}}^{\otimes n}) - \mathcal{E}^{(n)}_{\lambda^{(j)}}(\rho_{\lambda^{(j-1)}}^{\otimes n})\Vert_1 + \Vert  \mathcal{E}^{(n)}_{\lambda^{(j)}}(\rho_{\lambda^{(j-1)}}^{\otimes n}) - \rho_{\lambda^{(j)}}^{\otimes n}\Vert_1 \nonumber \\
            & \leq \Vert \rho_{\lambda^{(1)}}^{\otimes n} - \rho_{\lambda^{(j-1)}}^{\otimes n}\Vert_1 + \Vert  \mathcal{E}^{(n)}_{\lambda^{(j)}}(\rho_{\lambda^{(j-1)}}^{\otimes n}) - \rho_{\lambda^{(j)}}^{\otimes n}\Vert_1 \nonumber \\
            & \leq \sum_{m=1}^{j-1} \Vert  \mathcal{E}^{(n)}_{\lambda^{(m+1)},\lambda^{(m)}}(\rho_{\lambda^{(m)}}^{\otimes n}) - \rho_{\lambda^{(m+1)}}^{\otimes n}\Vert_1 .
        \end{align}
        Taking the limit $n\to\infty$ yields
        \begin{equation}
            \lim_{n\to\infty} \Vert \mathcal{E}^{(n)}_{\lambda^{(j)}}\bigl(\rho_{\lambda^{(1)}}^{\otimes n}\bigr)-\rho_{\lambda^{(j)}}^{\otimes n}\Vert_1 = 0,
        \end{equation}
        for all $j=1,\dots,L$.
        Thus, the discretized resource-monotonic state family is asymptotically initialized by the finite-resolution resource-Markovian dynamics $\mathcal{E}^{(n)}_{\lambda^{(j)}}$ from the common input $\rho_{\lambda^{(1)}}^{\otimes n}$.
        If the discretized state family exhibits a state-based Mpemba effect under the replica-wise relaxation dynamics $\mathcal{N}_t^{\otimes n}$, then the induced discretized initialization family is no longer resource-Markovian after this relaxation.
        Hence the operation Mpemba effect also occurs asymptotically on the finite discretization.
    \end{proof}

    Extending this finite-resolution statement to a fully continuous asymptotic equivalence requires constructing a continuous resource-Markovian dynamics that asymptotically initializes the resource-monotonic state family $\rho_{\lambda}$.
    Such a dynamics can be constructed either globally or locally.
    A global construction groups all free operations $\mathcal{E}^{(n)}_{\lambda,0}$ that asymptotically convert the common input $\rho_{0}^{\otimes n}$ to each state $\rho_{\lambda}^{\otimes n}$, thus asymptotically initializes the whole family $\rho_{\lambda}^{\otimes n}$ automatically.
    However, the resource-Markovianity of the constructed dynamics has no guarantee.
    A local construction composes the infinitesimal free intermediate operations $\mathcal{E}^{(n)}_{\lambda+\mathrm{d}\lambda,\lambda}$ that asymptotically convert $\rho_{\lambda}^{\otimes n}$ to $\rho_{\lambda+\mathrm{d}\lambda}^{\otimes n}$ for infinitesimal $\mathrm{d}\lambda$, which automatically satisfies resource-Markovianity.
    In this case, the asymptotically initializing property requires that 
    \begin{equation}
        \lim_{n\to\infty} \int_0^{\lambda} f_n(\tilde{\lambda}) \mathrm{d}\tilde{\lambda} = 0,
    \end{equation} 
    holds for all $\lambda$, where $f_n(\tilde{\lambda}) = \lim_{\mathrm{d}\tilde{\lambda} \to 0} \Vert \mathcal{E}^{(n)}_{\tilde{\lambda} + \mathrm{d}\tilde{\lambda},\tilde{\lambda}}(\rho_{\tilde{\lambda}}^{\otimes n}) - \rho_{\tilde{\lambda} + \mathrm{d}\tilde{\lambda}}^{\otimes n}\Vert_1/\mathrm{d}\tilde{\lambda}$.
    Notice that in the limit $n\to\infty$, the function $f_n(\tilde{\lambda})$ converges pointwise to zero for all $\tilde{\lambda}$.
    By Vitali's convergence theorem, convergence of the integrals is equivalent to uniform integrability of the function sequence $\{f_n(\tilde{\lambda})\}$, which is a stronger condition than pointwise convergence and does not automatically follow from asymptotic reversibility alone.
    Therefore, the asymptotic equivalence between the operational and state-based Mpemba effects cannot be established in the continuous sense without constraints on the convergence rate of asymptotic reversibility.

    \begin{proposition}
        If the resource theory admits one-shot free convertibility, then the operation Mpemba effect and the state-based Mpemba effect are exactly equivalent.
    \end{proposition}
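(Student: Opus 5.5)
The plan is to prove the two implications separately. The direction from the operation Mpemba effect to the state-based one is exactly the Proposition in the main text; it uses neither reversibility nor one-shot convertibility, so I would cite it directly. All the work is in the converse. Suppose a resource-monotonic family $\rho_\lambda$ shows a state-based Mpemba effect under $\mathcal{N}_t$. Then there are $\lambda_1<\lambda_2$ with $M(\rho_{\lambda_1})\geq M(\rho_{\lambda_2})$ but $M(\mathcal{N}_t(\rho_{\lambda_1}))<M(\mathcal{N}_t(\rho_{\lambda_2}))$. I would take $M=M^{\infty}$, the unique measure that governs the order in a reversible theory. I need to build a resource-Markovian dynamics $\mathcal{E}_\lambda$ and one input state $\rho_0$ such that $\mathcal{E}_\lambda(\rho_0)=\rho_\lambda$ exactly. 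Then $\mathcal{N}_t\circ\mathcal{E}_\lambda$ is not resource-Markovian, and that is the operation Mpemba effect.

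For the construction I would use one-shot free convertibility directly. Fix the common input $\rho_0:=\rho_{\lambda_{\min}}$, the most resourceful member. For every $\lambda$, monotonicity gives $M^\infty(\rho_{\lambda_{\min}})\ge M^\infty(\rho_\lambda)$, so there is a free operation $\mathcal{E}_\lambda$ with $\mathcal{E}_\lambda(\rho_0)=\rho_\lambda$. This is the "global construction" from the discussion after the previous proposition. There it failed only because, asymptotically, resource-Markovianity of the grouped family was not guaranteed. Here I would check resource-Markovianity in the one-shot sense, via the intermediate maps. For $\lambda<\lambda'$, one-shot convertibility again gives a free $\mathcal{E}_{\lambda',\lambda}$ with $\mathcal{E}_{\lambda',\lambda}(\rho_\lambda)=\rho_{\lambda'}$. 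This is exactly the defining relation $\rho_{\lambda_2}=\mathcal{E}_{\lambda_2,\lambda_1}(\rho_{\lambda_1})$ that the main text uses to introduce resource-monotonic families generated by free dynamics. No approximation is involved, so the continuity and uniform-integrability obstruction of the finite-resolution proposition disappears: the construction works pointwise in $\lambda$ for the whole continuous family.

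The main obstacle is the mismatch between the two notions of resource-Markovianity. The state-wise definition, $M(\mathcal{E}_{\lambda_1}(\rho))\ge M(\mathcal{E}_{\lambda_2}(\rho))$ for \emph{all} $\rho$, may fail for an arbitrarily chosen family $\mathcal{E}_\lambda$ that only needs to map $\rho_0$ correctly. I would handle this by defining the dynamics compositionally, $\mathcal{E}_{\lambda}:=\mathcal{E}_{\lambda,\lambda'}\circ\mathcal{E}_{\lambda'}$ along the family, or by discretizing and composing as in the previous proof. Then $\mathcal{E}_{\lambda_2}=\mathcal{E}_{\lambda_2,\lambda_1}\circ\mathcal{E}_{\lambda_1}$ with a free intermediate map, and monotonicity of $M$ under free operations gives $M(\mathcal{E}_{\lambda_2}(\rho))\le M(\mathcal{E}_{\lambda_1}(\rho))$ for every $\rho$. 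For a continuous family, a consistent choice of composable intermediate maps needs care. I would rely on the fact that, in a one-shot convertible theory, the order is total and governed by $M^\infty$: choose the intermediate maps first on a dense set of parameters, define $\mathcal{E}_\lambda$ by composition there, and fill in the remaining $\lambda$ by the one-shot conversion from the nearest lower grid point. Even if full continuity in $\lambda$ is lost, the exact equivalence only needs the two-parameter witness $\lambda_1<\lambda_2$. So it suffices to take the three-step family $\mathcal{I}$, $\mathcal{E}_{\lambda_1,\lambda_{\min}}$, $\mathcal{E}_{\lambda_2,\lambda_1}\circ\mathcal{E}_{\lambda_1,\lambda_{\min}}$, which is manifestly free-divisible. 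Applying $\mathcal{N}_t$ to it reproduces the strict inequality above and completes the equivalence.
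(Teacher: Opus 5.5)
Your proof is correct, but its decisive step differs from the paper's.

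\textbf{The paper's route.} It keeps the entire continuous family. It composes infinitesimal one-shot conversions $\mathcal{E}_{\lambda+\mathrm{d}\lambda,\lambda}$ into a single dynamics that is resource-Markovian by construction and exactly initializes every $\rho_{\lambda}$ from $\rho_{0}$. This exact, continuum-wide realization is what the proposition claims over the previous finite-resolution asymptotic result.

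\textbf{Your route.} You use that both effects are existence statements about $\mathcal{N}_t$, so a single exact conversion suffices. Take the common input $\rho_{\lambda_1}$. The family $\{\mathcal{I},\mathcal{E}_{\lambda_2,\lambda_1}\}$ is resource-Markovian for all inputs because its intermediate map is free. $\mathcal{N}_t$ breaks it through the witnessing inequality. You can drop $\lambda_{\min}$, which need not exist. This is legitimate within the paper's own framework, which admits discrete parameter families: the discrete $N(\mathcal{K})$ in the main text and the discretized family in the preceding proof.

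\textbf{What each buys.} Your argument is minimal and fully rigorous. It sidesteps the step the paper asserts without derivation: that uncountably many independently chosen one-shot conversions can be composed into a well-defined limit dynamics. The paper's route, if that step is granted, gives a stronger conclusion. There the whole resource-monotonic family, not just the witnessing pair, is realized by one resource-Markovian dynamics.

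\textbf{Two corrections.} First, drop the dense-set sketch. A dense grid has no consecutive points and no ``nearest lower grid point''. Filling two non-grid parameters by separate conversions from a common lower point also recreates the failure you identified for the global construction: monotonicity for all inputs is not guaranteed. Second, the paper reserves ``free-divisible'' for invertible dynamics. What your family actually has, and all you need, is resource-Markovianity via free intermediate maps.
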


    \begin{proof}
        Again, the implication from the operational formulation to the state-based formulation follows from the proposition in the main text.
        Conversely, assume that the state-based Mpemba effect for the relaxation operation $\mathcal{N}_t$ holds for a resource-monotonic state family $\rho_{\lambda}$.
        By one-shot free convertibility, for each pair of parameters $(\lambda_1, \lambda_2)$, there exists a free operation $\mathcal{E}_{\lambda_2,\lambda_1}$ such that
        \begin{equation}
            \rho_{\lambda_2} = \mathcal{E}_{\lambda_2,\lambda_1}(\rho_{\lambda_1}).
        \end{equation}
        Now consider the local construction of the free initialization dynamics by composing the infinitesimal free intermediate operations $\mathcal{E}_{\lambda+\mathrm{d}\lambda,\lambda}$, which is resource-Markovian by construction.
        Since each infinitesimal step exactly converts the state $\rho_{\lambda}$ to $\rho_{\lambda+\mathrm{d}\lambda}$, the constructed dynamics exactly initializes the whole resource-monotonic state family $\rho_{\lambda}$ from the common input $\rho_{0}$.
        This yields an exact resource-Markovian operational realization of the resource-monotonic state family.
        Hence, the two formulations are exactly equivalent.
    \end{proof}

\section{Operation Mpemba Effects in Lindblad Form}
\label{app: master}

    This supplementary material applies to resource theories whose free operations are generated by a closed convex cone of Lindblad generators. In that setting, the free set is closed under composition and forms a monoid (a semigroup with identity), and the continuously connected free operations can be represented as exponentials $\exp \mathcal{L}$ of Lindblad generators~\cite{shibata1977generalized,chaturvedi1979time}:
    \begin{align}
    \mathcal{L}(\cdot) = - \mathrm{i} [\hat{H}, (\cdot)] 
    + \sum_k \gamma_{k} \left[\hat{A}_{k} (\cdot) \hat{A}_{k}^{\dagger} - \frac{1}{2} \{\hat{A}_{k}^{\dagger}\hat{A}_{k}, (\cdot)\}\right].
    \end{align}

    Assume the generators of the free set are $\mathcal{L}_i$. By Trotter decomposition, the operation
    \begin{equation}
    \exp\left(\sum_i x_i \mathcal{L}_i\right) = \lim_{N\rightarrow \infty} \left(\prod_i e^{x_i \mathcal{L}_i/N}\right)^N,
    \end{equation}
    is asymptotically free when $x_i \geq 0$. We assume the free set is closed, so it is identified with a closed convex cone in the space spanned by these Lindblad generators.

    The adjoint action of the operation $\mathcal{N}$ on free operations is
    \begin{equation}
    \mathcal{N} \circ \exp\left(\sum_i x_i \mathcal{L}_i\right) \circ \mathcal{N}^{-1} = \exp\left(\sum_i x_i \tilde{\mathcal{L}}_i\right),
    \end{equation}
    where $\tilde{\mathcal{L}}_i = \mathcal{N} \circ \mathcal{L}_i\circ \mathcal{N}^{-1}$. If the operation $\mathcal{N}$ has no operation Mpemba effect, the operations $\tilde{\mathcal{L}}_i$ are free, i.e., there are $\alpha_{ij}(\mathcal{N}) \geq 0$ such that
    \begin{equation}
    \mathcal{N} \circ \mathcal{L}_i\circ \mathcal{N}^{-1} = \sum_j \alpha_{ij}(\mathcal{N}) \mathcal{L}_j,
    \end{equation}
    and the elements $\alpha_{ij}(\mathcal{N})$ of the adjoint representation of $\mathcal{N}$ on the space spanned by the generators $\mathcal{L}_i$ of the free set are all non-negative.

    Moreover, the relaxation operation $\mathcal{N}$ is also free. Denote $\mathcal{L}_{\mathcal{N}}$ as the generator of $\mathcal{N} = \exp \mathcal{L}_{\mathcal{N}}$, which is expressed as $\mathcal{L}_{\mathcal{N}} = \sum_i n_i \mathcal{L}_{i}$ with $n_i \geq 0$. If the generators $\mathcal{L}_i$ of the free set are closed under the Lie product
    \begin{equation}
    [\mathcal{L}_i,\mathcal{L}_j] = \sum_k \beta_{jk}(\mathcal{L}_i) \mathcal{L}_k,
    \end{equation}
    where $\beta_{jk}(\mathcal{L}_i) \geq 0$ are non-negative, then the resource theory has no free operation exhibiting operation Mpemba effect, since
    \begin{equation}
    \mathcal{N} \circ \mathcal{L}_i\circ \mathcal{N}^{-1} = \sum_{j} \left[\exp\left(\sum_k n_k \beta(\mathcal{L}_k)\right)\right]_{ij} \mathcal{L}_j,
    \end{equation}
    where the matrix elements $\left[\exp\left(\sum_k n_k \beta(\mathcal{L}_k)\right)\right]_{ij}$ are all non-negative due to $n_k$ and $\beta_{ij}(\mathcal{L}_k)$ being non-negative.

    \section{Details of Mpemba Effect for Distinguishability}
    \label{app: distinguishability}

    In this supplementary material, we provide details about the Mpemba effect for distinguishability. 
    In Bloch vector, the Lindblad equation can be represented as $\frac{\mathrm{d} \vec{r}}{\mathrm{d} t} = \Omega \vec{r} + \vec{b}$, where
    \begin{equation}
        \Omega_{ij} = \left(\begin{array}{ccc} 
            -C_{22}-C_{33} & -h_3 + \mathrm{Re} C_{12} & h_2 + \mathrm{Re} C_{31} \\ 
            h_3 + \mathrm{Re} C_{21} & -C_{11}-C_{33} & -h_1 + \mathrm{Re} C_{23} \\ 
            -h_2 + \mathrm{Re} C_{31} & h_1 + \mathrm{Re} C_{32} & -C_{11}-C_{22} 
        \end{array}\right),
    \end{equation}
    and $\vec{b} = -4 (\mathrm{Im} C_{23}, \mathrm{Im} C_{31}, \mathrm{Im} C_{12})^{T}$.
    Since the trace distance only depends on the difference $\vec{r}^{(12)} = \vec{r}^{(1)} - \vec{r}^{(2)}$, which satisfies $\frac{\mathrm{d}\vec{r}^{(12)}}{\mathrm{d} t} = \Omega \vec{r}^{(12)}$, we assume $\vec{b} = 0$, and
    \begin{equation}
        \Omega_{ij} = \left(\begin{array}{ccc} 
            -C_{22}-C_{33} & -h_3 + C_{12} & h_2 + C_{31} \\ 
            h_3 +  C_{21} & -C_{11}-C_{33} & -h_1 + C_{23} \\ 
            -h_2 +  C_{31} & h_1 +  C_{32} & -C_{11}-C_{22} 
        \end{array}\right).
    \end{equation}
    Similarly, the Lindblad matrix of initialization dynamics is 
    \begin{equation}
        \Theta_{ij} = \left(\begin{array}{ccc} 
            -D_{22}-D_{33} & -k_3 + D_{12} & k_2 + D_{31} \\ 
            k_3 + D_{21} & -D_{11}-D_{33} & -k_1 + D_{23} \\ 
            -k_2 + D_{31} & k_1 + D_{32} & -D_{11}-D_{22} 
        \end{array}\right).
    \end{equation}

    With the parameterization of the initialization dynamics $\mathcal{E}_{\lambda}$ and the relaxation operation $\mathcal{N}_t$, the output Bloch vector after the induced operation $\mathcal{N}_t \circ \mathcal{E}_{\lambda}$ is
    \begin{equation}
        \vec{r}^{(12)}(\lambda,t) = \exp(\Omega t) \exp(\Theta \lambda) \vec{r}^{(12)}(0).
    \end{equation}
    The rate of change of the trace distance is
    \begin{align}
        \sigma(\lambda) & = \frac{\mathrm{d}}{\mathrm{d} \lambda} D(\vec{r}^{(12)}(\lambda,t)) 
        = \frac{\vec{r}^{(12)}(\lambda,t)^{T} \exp(\Omega t)\Theta\exp(-\Omega t) \vec{r}^{(12)}(\lambda,t)}{|\vec{r}^{(12)}(\lambda,t)|}.
    \end{align}
    Under the constraint $r_3^{(12)} = 0$, the dynamics of initialization $\exp(\Theta \lambda)$ and relaxation $\exp(\Omega t)$ are effectively in two-dimension, which can be represented as $\Theta = \vec{\Theta} \cdot \hat{\bm{\sigma}} + \Theta_0$ with $\vec{\Theta} = (\Theta_1, \Theta_2, \Theta_3)^T$, where
    \begin{align}
        \Theta_0 & = - D_{33} - \frac{1}{2}(D_{11}+D_{22}), \quad
        \Theta_1 = D_{12}, \\
        \Theta_2 & = -\mathrm{i} k_3, \quad
        \Theta_3 = \frac{1}{2}(D_{11}-D_{22}),
    \end{align} 
    and $\Omega = \vec{\Omega} \cdot \hat{\bm{\sigma}} + \Omega_0$ with $\vec{\Omega} = (\Omega_1, \Omega_2, \Omega_3)^T$, where
    \begin{align}
        \Omega_0 & = - C_{33} - \frac{1}{2}(C_{11}+C_{22}), \quad
        \Omega_1 = C_{12}, \\
        \Omega_2 & = -\mathrm{i} h_3, \quad
        \Omega_3 = \frac{1}{2}(C_{11}-C_{22}).
    \end{align} 
    The rate of change for $\vec{r}^{(12)} = (r_1, r_2)^T$ is
    \begin{equation}
        \sigma = \frac{1}{|\vec{r}^{(12)}|} \vec{r}^{(12)T} \left[\Theta_1 \hat{\sigma}_1 + \Theta_3 \hat{\sigma}_3\right] \vec{r}^{(12)} + \Theta_0 |\vec{r}^{(12)}|.
    \end{equation}
    For a Markovian dynamics, the rate of change of the trace distance $\sigma$ is non-positive for any vector $\vec{r}^{(12)}$, which is equivalent to 
    \begin{equation}
        \Theta_1 \hat{\sigma}_1 + \Theta_3 \hat{\sigma}_3 \leq \Theta_0, \quad \mathrm{or} \quad (\Theta_1^2 + \Theta_3^2)^{1/2} \leq \Theta_0.
    \end{equation}

    By setting $\Omega_2 = 0$, the relaxation dynamics yields
    \begin{align}
        & (\Theta_1^2 + \Theta_3^2)^{1/2} = |\vec{l} \cdot \vec{\Theta}| \sqrt{1 - (\vec{e_2} \cdot\vec{l})^2/|\vec{l}|^2} 
        \exp \left[\sqrt{1 - (\vec{e_2} \cdot\vec{l})^2/|\vec{l}|^2} t/2\right] + O(1),
    \end{align}
    where $\vec{e_2} = (0,1,0)^T$, $\vec{l} = \left(-2 \Omega_3, -2\mathrm{i} \sqrt{\Omega_1^2 + \Omega_3^2}, 2\Omega_1\right)^T$ is the eigenvector with positive eigenvalue. 
    Note that
    \begin{equation}
        \sqrt{1 - (\vec{e_2} \cdot\vec{l})^2/|\vec{l}|^2} = 2 \sqrt{\Omega_1^2 + \Omega_3^2} \neq 0,
    \end{equation}
    otherwise the operation $\Omega$ is trivial and no Mpemba effect occurs. 
    The quantity $(\Theta_1^2 + \Theta_3^2)^{1/2}$ increases exponentially if $\vec{l} \cdot \vec{\Theta} \neq 0$, and the Mpemba effect for distinguishability occurs for some time.

    Specifically, we consider the initial Markovian dynamics with $k_3 = 4$, $D_{11} = D_{22} = D_{12} = 0$, and $D_{33} = 1$, and the relaxation operation with $h_3 = 0$, $C_{11} = C_{22} = C_{12} = C_{33} = 1$. 
    Then, $\vec{l} \cdot \vec{\Theta} = 8 \neq 0$. 
    We take the two initial states as $\ket{+} = \frac{1}{\sqrt{2}}(\ket{0} + \ket{1})$ and $\ket{-} = \frac{1}{\sqrt{2}}(\ket{0} - \ket{1})$, thus $\vec{r}^{(12)}(0) = (2,0)^T$.
    The dynamics of the initialization operation is
    \begin{equation}
        \mathcal{E}_{\lambda} = \exp(\Theta \lambda) = e^{-\lambda} \left(\begin{array}{cc} \cos4\lambda & -\sin4\lambda \\ \sin4\lambda & \cos4\lambda \end{array}\right),
    \end{equation}
    and the dynamics of the relaxation operation is
    \begin{equation}
        \mathcal{N}_t = \exp(\Omega t) = e^{-2t} \left(\begin{array}{cc} \cosh t & \sinh t \\ \sinh t & \cosh t \end{array}\right).
    \end{equation}
    When $\lambda = 0$, the trace distance after $\mathcal{N}_t \circ \mathcal{E}_{\lambda}$ is
    \begin{equation}
        D_{\mathcal{N}_t \circ \mathcal{E}_{0}}(\ket{+},\ket{-}) = e^{- 2 t} \cosh^{1/2} 2t = e^{- t} \left(\frac{1+e^{-4t}}{2}\right)^{1/2},
    \end{equation}
    whereas when $\lambda = \frac{\pi}{16}$, the trace distance is
    \begin{equation}
        D_{\mathcal{N}_t \circ \mathcal{E}_{\frac{\pi}{16}}}(\ket{+},\ket{-}) = e^{-t-\pi/16}.
    \end{equation}
    Initially, $D_{\mathcal{N}_t \circ \mathcal{E}_{0}}(\ket{+},\ket{-}) > D_{\mathcal{N}_t \circ \mathcal{E}_{\frac{\pi}{16}}}(\ket{+},\ket{-})$.
    For long time, $D_{\mathcal{N}_t \circ \mathcal{E}_{0}}(\ket{+},\ket{-}) \approx \frac{1}{\sqrt{2}} e^{- t}$, which is smaller than $D_{\mathcal{N}_t \circ \mathcal{E}_{\frac{\pi}{16}}}(\ket{+},\ket{-})$. 
    The trace distance crossing occurs at
    \begin{equation}
        t_c = - \frac{1}{4}\ln(2 e^{-\pi/8} - 1) \approx 0.2621,
    \end{equation}
    The dynamics of trace distance is plotted in Fig.~\ref{figure2}(a).
    In Fig.~\ref{figure2}(b) and (c), the setting has varying $k_3$ while other parameters are the same.

\section{Details of Thermomajorization Mpemba Effect}
\label{app: thermal}

    The classical probability distribution $\vec{p} = (p_1,\dots,p_d)$ thermomajorizes $\vec{q} = (q_1,\dots,q_d)^T$ if the Lorenz curve of $(\vec{p},\vec{\pi})$ lies above that of $(\vec{q},\vec{\pi})$.
    The Lorenz curve of $(\vec{p},\vec{\pi})$ consists of the piecewise lines connecting the points $\left(\sum_{k=1}^{n} \tilde{\pi}_k, \sum_{k=1}^{n} \tilde{p}_k\right)$, for $n = 0, 1, \dots, d$, where the probability distribution $\tilde{\vec{p}} = (\tilde{p}_1,\dots, \tilde{p}_d)$ and $\tilde{\vec{\pi}} = (\tilde{\pi}_1,\dots, \tilde{\pi}_d)$ are the permuted probability of $\vec{p}$ and $\vec{\pi}$, such that $\tilde{p}_1/\tilde{\pi}_1 \geq \tilde{p}_2/\tilde{\pi}_2 \geq \dots \geq \tilde{p}_d/\tilde{\pi}_d$.
    Here, for $n=0$, the point is $(0,0)$.

    With the relation between thermomajorization and stochastic matrix, we illustrate the relation between thermomajorization Mpemba effect and operation Mpemba effect.
    If operation Mpemba effect does not occur at time $t$, the induced dynamics $\hat{R}_t \hat{S}_{\lambda}$ is thermal-divisible, and the intermediate matrix $\hat{R}_t \hat{S} \hat{R}_t^{-1}$ is still thermal. 
    Since $\vec{q} = \hat{S} \vec{p}$ initially, $\vec{q}_t = \hat{R}_t \hat{S} \hat{R}_t^{-1} \vec{p}_t$ implies that $\vec{q}_t \prec_{\vec{\pi}} \vec{p}_t$, and the thermomajorization Mpemba effect does not occur.
    Furthermore, if operation Mpemba effect occurs at time $t$, the intermediate matrix $\hat{R}_t \hat{S} \hat{R}_t^{-1}$ is no longer thermal.
    However, although $\hat{R}_t \hat{S} \hat{R}_t^{-1}$ is athermal,  $\hat{R}_t \hat{S}^{-1} \hat{R}_t^{-1}$ may also be athermal.
    With $\vec{q} = \hat{S} \vec{p}$, $\vec{p}_t = \hat{R}_t \hat{S}^{-1} \hat{R}_t^{-1} \vec{q}_t$ does not imply that $\vec{p}_t \prec_{\vec{\pi}} \vec{q}_t$. The thermomajorization Mpemba effect requires the inverse intermediate matrix to be thermal.

    Since both the relaxation dynamics $\hat{R}_t$ and the initialization dynamics $\hat{S}$ are thermal, the intermediate matrix $\hat{R}_t \hat{S} \hat{R}_t^{-1}$ satisfies 
    \begin{equation}
        (\hat{R}_t \hat{S} \hat{R}_t^{-1}) \vec{\pi} = \vec{\pi}, \quad (\hat{R}_t \hat{S} \hat{R}_t^{-1})^{T} \vec{I} = \vec{I}.   
    \end{equation}
    The operation Mpemba effect requires the intermediate matrix $\hat{R}_t \hat{S} \hat{R}_t^{-1}$ has negative elements.
    Therefore, we introduce the $L_1$-norm $\Vert \cdot \Vert_1$ to detect the negative elements.
    For a stochastic matrix $\hat{M}$, the $L_1$-norm is $\Vert \hat{M} \Vert_1 = \max_j \sum_i |M_{ij}| = 1$, while for a matrix with negative elements, $\Vert \hat{M} \Vert_1 > 1$.
    The thermomajorization Mpemba effect can be detected by the $f_{\alpha}$-divergence with $\alpha \rightarrow 0$, which is equivalent to evaluating the cross-over of $\max_n [p_n/\pi_n]$~\cite{PhysRevLett.134.107101}.

    For a two-level system $\vec{p} = (p_0,p_1)^T$, set the thermal state as $\vec{\pi} = \left(\frac{\rho}{1+\rho},\frac{1}{1+\rho}\right)^T$, where $\rho \geq 0$.
    Thus, a thermal dynamics matrix is
    \begin{equation}
        \hat{S} = \left(\begin{array}{cc}
            1-a & a \rho  \\
            a & 1-a \rho
        \end{array}\right),
    \end{equation}
    where $0 \leq a \leq 1$.
    At the time $t$, denote the thermal relaxation matrix as
    \begin{equation}
        \hat{R}_t = \left(\begin{array}{cc}
            1-b & b \rho \\
            b & 1- b \rho
        \end{array}\right).
    \end{equation}
    The relaxation matrix $\hat{R}_t$ and the initialization matrix $\hat{S}$ commute, thus the intermediate matrix $\hat{R}_t \hat{S} \hat{R}_t^{-1} = \hat{S}$ is still thermal.
    This implies that neither the operation Mpemba effect nor the thermomajorization Mpemba effect occurs.
    For this reason, we consider the three-level system in the main text.

\end{document}